\documentclass[11pt]{article}

\pdftrailerid{}
\usepackage[T1]{fontenc}
\usepackage[utf8]{inputenc}
\usepackage{lmodern,amsmath,amssymb,amsthm,mathtools,mathrsfs}
\usepackage[a4paper,margin=1.02in]{geometry}
\usepackage{booktabs,longtable,array,microtype}
\usepackage{xcolor}
\usepackage[colorlinks=true,linkcolor=blue!50!black,
  citecolor=blue!50!black,urlcolor=blue!50!black]{hyperref}
\usepackage{enumitem,needspace}
\setlist[itemize]{leftmargin=1.5em,itemsep=2pt,topsep=4pt}
\setlist[enumerate]{leftmargin=1.8em,itemsep=2pt,topsep=4pt}
\allowdisplaybreaks
\numberwithin{equation}{section}
\newtheorem{theorem}{Theorem}[section]
\newtheorem{proposition}[theorem]{Proposition}

\newtheorem{corollary}[theorem]{Corollary}
\theoremstyle{definition}

\theoremstyle{remark}

\newcommand{\R}{\mathbb R}

\newcommand{\Q}{\mathbb Q}

\newcommand{\dd}{\,\mathrm d}

\newcommand{\sgn}{\operatorname{sgn}}

\hypersetup{
  pdftitle={Closed-form evaluation of the cyclic four-electron atomic correlation integral},
  pdfauthor={Jonas Matuzas},
  pdfsubject={Classical-polylogarithmic evaluation and certified verification of a cyclic four-electron Slater integral},
  pdfkeywords={four-electron atomic integral, Hylleraas basis, polylogarithm,
    Feynman period, rigorous numerical verification}}
\title{Closed-form evaluation of the cyclic\\four-electron atomic correlation integral}
\author{Jonas Matuzas\\
\href{mailto:jonas.matuzas@gmail.com}{\nolinkurl{jonas.matuzas@gmail.com}}}
\date{}
\begin{document}
\maketitle
\begin{abstract}
For positive exponents $\alpha_1,\ldots,\alpha_4$, we evaluate the cyclic four-electron Slater integral
\[
 I(\alpha)=\int_{\R^{12}}\frac{e^{-\sum_i\alpha_i r_i}}
 {r_1r_2r_3r_4\,r_{12}r_{23}r_{34}r_{41}}\prod_{i=1}^4\dd^3r_i
\]
in a finite real expression of weight at most three.  The general answer is an explicit sum of eighteen blocks involving only logarithms, dilogarithms, and trilogarithms of rational functions of the exponents; all logarithm arguments are positive and all nonconstant polylogarithm arguments may be placed in $(-1,1)$.  At equal exponents,
\[
 I(\lambda,\lambda,\lambda,\lambda)=\frac{(4\pi)^4}{\lambda^4}
 \left\{\frac12\operatorname{Li}_3\!\left(\frac14\right)-\frac7{24}\zeta(3)
 +\frac{\pi^2\log2}{12}-\frac23\log^3 2\right\}.
\]
The equal-exponent value is a weight-three period over $\Q$ ramified only at $2$ and $3$.  Three independently implemented numerical routes agree at four mass tuples to at least 32 decimal places, and an exact-dyadic evaluation of the classical formula gives eight 100-place certificates.  We also prove the small- and large-exponent limits, dihedral symmetry, and a strict rearrangement law.  The formula applies to the four-cycle and does not evaluate the two chorded topologies.
\end{abstract}

\noindent\textbf{Generative-AI disclosure and author responsibility.}
OpenAI GPT-5.6 Pro was used for derivations, computations, and drafting; Claude Fable 5.1 was used for independent re-derivation and audit of every increment. The author directed and audited the work throughout and takes full responsibility for the mathematics and the final text.

\medskip
\noindent\textbf{2020 Mathematics Subject Classification.}
Primary 81V45; Secondary 33B30, 81T18, 65D30.

\medskip
\noindent\textbf{Keywords.}
Four-electron atomic integral; Hylleraas basis; polylogarithm; Feynman period; rigorous numerical verification.

\section{Introduction}
% TRACE: history and scope -- inspected literature; C4EXEC publication audit.
The integral studied here is the first four-electron rim topology whose cycle passes through all four electrons.  In Harris's terminology, it is the smallest linked topology beyond three vertices.  Its rim is the cycle $C_4=\{12,23,34,41\}$, while the nucleus is joined to each electron by a Slater spoke.  The result below concerns this topology only.  Adding either diagonal changes the analytic class of the fixed-radius angular problem.

Bonham's 1965 classification isolates, as case V, the four-electron cyclic integral and formulates conjectures A--C for reductions of the corresponding angular and radial expressions \cite{Bonham1965}.  Roberts treated related four-electron correlated integrals in the same period \cite{Roberts1965}.  Bonham's 1966 note treats the square with one diagonal---the chorded topology of Section~\ref{sec:outlook}---and leaves it in the form of a multiple integral \cite{Bonham1966}.

For three electrons, Fromm and Hill obtained the general master integral in dilogarithms \cite{FrommHill}; Remiddi gave a compact analytic form \cite{Remiddi}, Harris supplied another formulation \cite{Harris1997}, and Pachucki, Puchalski, and Remiddi derived recursions for the nonnegative-index family \cite{PPR2004}.  These results solve the triangle rim but do not supply the four-cycle.

King developed systematic four-electron reductions and closed examples in 2014 and 2016 \cite{King2014,King2016}.  His tables include forests and a triangle with a leaf, but not the cyclic integral evaluated here.  Harris's review explains that standard Hylleraas--configuration interaction attaches at most one interelectronic factor to each configuration; its integral algebra therefore does not generate a ring \cite{Harris2018}.  The present result is instead relevant to fully correlated Hylleraas expansions, double-linked Hy-CI, and extended Hy-CI.

The momentum-space counterpart is a three-dimensional vacuum integral.  Broadhurst evaluated the massive three-loop Ising tetrahedron in Clausen/dilogarithmic form and remarked that the result suggested a polylogarithmic ladder beginning with $\pi/\sqrt2$ \cite{Broadhurst1999}; Coffey subsequently proved and extended the tetrahedron identities \cite{Coffey2008}.  The equal-mass $\operatorname{Li}_3(1/4)$ value below realizes the next weight in that three-dimensional ladder.

The proof has two stages.  First, the angular variables collapse to an odd trilogarithm of one radial cross-ratio.  Second, the 24 radial orderings are summed before the final integration.  This order of operations cancels the quadratic letters that occur sector by sector and leaves a real rational-logarithmic kernel.  A universal classical primitive then gives the eighteen-block expression.

\section{Angular reduction and a real one-dimensional kernel}
Let $r_i=|\mathbf r_i|$, $r_{ij}=|\mathbf r_i-\mathbf r_j|$, and set
% TRACE: Inc-03 angular collapse; Inc-17 real kernel.
\begin{equation}\label{eq:def-I}
 I(\alpha_1,\alpha_2,\alpha_3,\alpha_4)=
 \int_{\R^{12}}\frac{e^{-\sum_{i=1}^4\alpha_i r_i}}
 {r_1r_2r_3r_4\,r_{12}r_{23}r_{34}r_{41}}
 \prod_{i=1}^4\dd^3r_i .
\end{equation}
The integral is finite for $\alpha_i>0$ and is homogeneous of degree $-4$.

Put $M_{ij}=\max(r_i,r_j)$ and
% TRACE: Inc-03/17.
\begin{equation}\label{eq:q-radial}
 q(r)=\frac{r_1r_2r_3r_4}{M_{12}M_{23}M_{34}M_{41}}\in(0,1].
\end{equation}
Successive Legendre contractions around the cycle give
% TRACE: Inc-03, starting radial formula.
\begin{equation}\label{eq:angular-collapse}
 I(\alpha)=\frac{(4\pi)^4}{2}\int_{(0,\infty)^4}e^{-\sum_i\alpha_i r_i}
 \bigl\{\operatorname{Li}_3(q)-\operatorname{Li}_3(-q)\bigr\}
 \prod_{i=1}^4\dd r_i .
\end{equation}
This identity also proves positivity and permits all subsequent exchanges by Tonelli's theorem.

\subsection{Ordered radial sectors}
For a strict ordering
$r_{\sigma(1)}<r_{\sigma(2)}<r_{\sigma(3)}<r_{\sigma(4)}$, put
\[
 (r_{\sigma(1)},r_{\sigma(2)},r_{\sigma(3)},r_{\sigma(4)})
 =(uvwR,vwR,wR,R),\qquad 0<u,v,w<1.
\]
The Jacobian is $R^3vw^2$, and, with the ordered exponents
$(a,b,c,d)=(\alpha_{\sigma(1)},\alpha_{\sigma(2)},
\alpha_{\sigma(3)},\alpha_{\sigma(4)})$,
\[
 D_\sigma=d+w(c+bv+auv),\qquad
 \int_0^\infty R^3e^{-D_\sigma R}\,\dd R=\frac6{D_\sigma^4}.
\]
Let $m(\sigma)=1$ when the vertices carrying the two largest radii are adjacent
on $C_4$, and let $m(\sigma)=2$ when they are opposite.  There are sixteen
sectors of the first kind and eight of the second.  Thus
% TRACE: Inc-03 sector normal form and 16/8 pattern count.
\begin{equation}\label{eq:sector-normal}
 I(\alpha)=3(4\pi)^4\sum_{\sigma\in S_4}\int_{[0,1]^3}
 \frac{vw^2}{D_\sigma^4}
 \left\{\operatorname{Li}_3(uv^{m(\sigma)}w)
 -\operatorname{Li}_3(-uv^{m(\sigma)}w)\right\}\dd u\,\dd v\,\dd w.
\end{equation}
The two sector types are separately linearly reducible.  A sectorwise reduction,
however, introduces unnecessary quadratic letters.  The shorter general answer
is obtained by summing the physical opposite-pair sectors before the final
integration.

For the complete radial sum write $S=\sum_i\alpha_i$, $P=\prod_i\alpha_i$, and
$r=\alpha_1\alpha_3/(\alpha_2\alpha_4)$.  Define
% TRACE: Inc-17 rational-density theorem.
\begin{align}\label{eq:FA-Xi}
 F_A(x)&=\frac{[\alpha_1+(S-\alpha_1)x][\alpha_3+(S-\alpha_3)x]}
 {x[S-\alpha_2+\alpha_2x][S-\alpha_4+\alpha_4x]},\notag\\
 F_B(x)&=\frac{[\alpha_2+(S-\alpha_2)x][\alpha_4+(S-\alpha_4)x]}
 {x[S-\alpha_1+\alpha_1x][S-\alpha_3+\alpha_3x]},\notag\\
 \Xi(x)&=\frac{\prod_i(S-\alpha_i+\alpha_i x)}
 {\prod_{ij\in C_4}[\alpha_i+\alpha_j+(S-\alpha_i-\alpha_j)x]} .
\end{align}
The apparent poles in
% TRACE: Inc-17, eqs. for g and divided logarithms.
\begin{equation}\label{eq:g-kernel}
 g(x)=\frac{\log F_A(x)}{1-x/r}+\frac{\log F_B(x)}{1-rx}+\log\Xi(x)
\end{equation}
are removable.  For example, if $\Delta=\alpha_1\alpha_3-\alpha_2\alpha_4x$, then
$F_A(x)=1+\Delta h_A(x)$ with
$h_A=(x-1)^2/[x(S-\alpha_2+\alpha_2x)(S-\alpha_4+\alpha_4x)]$; hence the first quotient is a real analytic divided logarithm.  In terms of the logarithmic
radial phase $t$, the complete density is elementary:
% TRACE: Inc-17 elementary Borel density.
\begin{equation}\label{eq:elementary-density}
 b_\alpha(t)=\frac1{6P}\frac{\dd^2}{\dd t^2}g(e^t),\qquad
 \frac{I(\alpha)}{(4\pi)^4}=6\int_0^\infty b_\alpha(t)
 \frac{\operatorname{Li}_3(e^{-t})-\operatorname{Li}_3(-e^{-t})}{2}\,\dd t.
\end{equation}
The identities $g(1)=0$, $g'(1)=-1$, and
$g(x)-g(1/x)=-2\log x$ fix the endpoint constants and the reversal symmetry.

\begin{proposition}\label{prop:real-kernel}
For every $\alpha_i>0$,
% TRACE: Inc-17 positive real kernel.
\begin{equation}\label{eq:real-kernel}
 I(\alpha)=\frac{(4\pi)^4}{P}\int_0^1
 [g(x)+\log x]\,\frac{\log(1+x)-\log(1-x)}{2x}\,\dd x .
\end{equation}
The integrand is real and positive.  It grows only logarithmically at $0$ and is
$O((1-x)^4|\log(1-x)|)$ at $1$.
\end{proposition}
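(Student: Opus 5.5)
The plan is to start from the elementary Borel-density representation \eqref{eq:elementary-density}, integrate by parts twice in $t$, and then substitute $x=e^{-t}$. Write $K(t)=\tfrac12[\operatorname{Li}_3(e^{-t})-\operatorname{Li}_3(-e^{-t})]$ and $G(t)=g(e^t)$, so that
\[
I(\alpha)/(4\pi)^4=\frac1P\int_0^\infty G''(t)K(t)\,\dd t .
\]
Since $K'(t)=-\tfrac12[\operatorname{Li}_2(e^{-t})-\operatorname{Li}_2(-e^{-t})]$ and $K''(t)=\tfrac12[\operatorname{Li}_1(e^{-t})-\operatorname{Li}_1(-e^{-t})]=\tfrac12\log\frac{1+e^{-t}}{1-e^{-t}}$, two integrations by parts give
\[
\int_0^\infty G''K=\bigl[G'K-GK'\bigr]_0^\infty+\int_0^\infty G K''\,\dd t .
\]
Before integrating, I would replace $G$ by $\tilde G(t)=G(t)-t$, since $G''=\tilde G''$. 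The reason is that $g(x)-g(1/x)=-2\log x$ forces $g(e^t)$ to grow like $-t$ up to a bounded term as $t\to\infty$, and subtracting $t$ removes this growth.

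The next step is to handle the boundary terms. At $t=0$ one has $\tilde G(0)=g(1)=0$ and $\tilde G'(0)=g'(1)-1=-2$, while $K(0)=\tfrac74\zeta(3)\cdot\tfrac12\cdot\ldots$ is nonzero. So the term $\tilde G'(0)K(0)$ does not vanish automatically, and I have to choose the shift more carefully.

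I would try $\tilde G(t)=G(t)+t$ instead. Then $\tilde G'(0)=g'(1)+1=0$ and $\tilde G(0)=0$, so both boundary contributions at $t=0$ vanish. The term $GK'$ at $0$ is finite, since $\operatorname{Li}_2(1)$ is finite, and it vanishes because $\tilde G(0)=0$.

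At $t\to\infty$, $K$, $K'$ and $K''$ all decay like $e^{-t}$. For the boundary terms there to vanish, $\tilde G$ may grow at most polynomially. I would verify this from the explicit form \eqref{eq:g-kernel} as $x=e^t\to\infty$: the logarithms of the rational functions $F_A$, $F_B$ and $\Xi$ grow at most linearly in $t$, and the divided factors $1/(1-x/r)$ and $1/(1-rx)$ tend to $0$. The remaining term is then
\[
\frac1P\int_0^\infty[g(e^t)+t]\,\tfrac12\log\frac{1+e^{-t}}{1-e^{-t}}\,\dd t .
\]
Substituting $x=e^{-t}$ (so $\dd t=-\dd x/x$) and using $g(1/x)-\log x=g(x)+\log x$, which follows from the reversal identity, turns this into exactly \eqref{eq:real-kernel}.

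For the local behaviour I would argue as follows. Near $x=0$, $\log\frac{1+x}{1-x}/(2x)\to1$, and $g(x)+\log x$ is $O(\log x)$ because of the $1/x$ inside $F_A$, $F_B$ and the logarithms. This gives logarithmic growth at $0$.

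Near $x=1$, write $y=-\log x$. The function $\tilde G(-\log x)$ is real analytic, since the poles in $g$ are removable. It vanishes at $y=0$, its first derivative vanishes there, and, as I expect from \eqref{eq:elementary-density}, so do its second and third derivatives. These higher vanishings come from the density $b_\alpha=G''/(6P)$ vanishing to second order at $t=0$, which reflects the $vw^2$ Jacobian (the radial phase density starts at order $t^2$). I would verify this by expanding \eqref{eq:g-kernel} to fourth order at $x=1$, using $F_A(1)=F_B(1)=\Xi(1)=1$. That yields $(1-x)^4$, and multiplying by the factor $|\log(1-x)|$ gives the stated bound.

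Positivity I would deduce from the positivity of $b_\alpha$ (Borel density of a positive measure, from \eqref{eq:angular-collapse}) together with $\tilde G(0)=\tilde G'(0)=0$. Integrating twice then shows $\tilde G\ge0$, and the kernel $\log\frac{1+x}{1-x}$ is positive on $(0,1)$.

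The main obstacle is the fourth-order vanishing at $x=1$, which needs a careful Taylor expansion of the divided logarithms. I would do it symbolically, using the removable-pole form $F_A=1+\Delta h_A$, where $h_A$ already carries $(x-1)^2$.
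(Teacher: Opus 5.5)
\textbf{The gap.} You take \eqref{eq:elementary-density} as an input. In the paper that identity is only announced in the running text. It is proved inside the proof of this very proposition, and it is the substantive step. What has to be shown is that the push-forward of $e^{-\sum_i\alpha_ir_i}\prod_i\dd r_i$ on $(0,\infty)^4$ under $t=-\log q(r)$ has density $P^{-1}\partial_t^2g(e^t)$.

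The paper obtains this as follows:
\begin{enumerate}
\item Rewrite each of the 24 sectors of \eqref{eq:sector-normal} as a Laplace transform in the three logarithmic gaps: the phase is $g_1+g_2+g_3$ in sixteen sectors and $g_1+2g_2+g_3$ in eight.
\item Pair the sectors that interchange the masses of an opposite pair, so that the quadratic denominators and square-root terms of the individual type-two primitives cancel.
\item Fix the surviving logarithmic coefficient through $\frac{\dd}{\dd q}\frac{q}{\Delta^2}=\frac{\alpha_1\alpha_3+\alpha_2\alpha_4q}{\Delta^3}$. This is what produces the divided logarithms $\log F_A/(1-x/r)$ and $\log F_B/(1-rx)$.
\item Check that the simple-pole residues of the remaining rational part cancel, so that its second primitive is $\log\Xi$.
\end{enumerate}
Without this computation nothing ties the particular function $g$ to $I$. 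Your positivity argument and your account of why $b_\alpha$ vanishes at $t=0$ also rest on it. Read as a bare integral identity against one kernel, \eqref{eq:elementary-density} does not say that $b_\alpha$ is the density of a positive measure. Nor is $h''(t)>0$ evident from the closed form of $g$ for general masses.

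\textbf{What is correct.} Granting \eqref{eq:elementary-density}, your argument is correct and is the same as the final part of the paper's proof. With $h(t)=g(e^t)+t$ one has $h(0)=h'(0)=0$, so both boundary terms at $t=0$ vanish. The reversal identity turns $h(-\log x)$ into $g(x)+\log x$. Positivity then follows from $h''>0$ for $t\neq0$; note that strictness needs this, not merely $h''\ge0$.

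\textbf{Minor slips.}
\begin{itemize}
\item In your abandoned first attempt, $g(e^t)$ does not grow like $-t$. It tends to a finite constant: the divided factors kill $\log F_A,\log F_B\sim-\log x$, and $\log\Xi$ has a finite limit.
\item The $t^2$ onset of the density comes from the area of the gap simplex $\{g_i\ge0,\ g_1+mg_2+g_3=t\}$. It does not come from the Jacobian, which is nonzero at the corner $u=v=w=1$.
\item The $(1-x)^4$ behaviour is a short check, not the main obstacle. The function $h$ is even, so $h'''(0)=0$. Also $h''(0)=g''(1)+g'(1)=0$, since $(\log F_A)''(1)=2(\mu_1\mu_3-\mu_2\mu_4)$ and its analogues give $g''(1)=(\mu_1+\mu_2+\mu_3+\mu_4)^2=1$.
\end{itemize}
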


\begin{proof}
Equation~\eqref{eq:sector-normal} becomes a Laplace transform in the three logarithmic gaps.  Sixteen orderings have phase $g_1+g_2+g_3$ and eight have phase $g_1+2g_2+g_3$.  In a type-two cone the individual rational primitive contains a quadratic denominator.  Interchanging the two masses in the smaller opposite pair cancels its nonlogarithmic primitive; interchanging the larger pair and combining the two halves at the internal square root cancels the remaining square-root terms.

Writing $q=e^t$ and $\Delta=\alpha_1\alpha_3-\alpha_2\alpha_4q$, the surviving logarithmic coefficient is fixed by the elementary identity
\[
 \frac{\dd}{\dd q}\frac{q}{\Delta^2}
 =\frac{\alpha_1\alpha_3+\alpha_2\alpha_4q}{\Delta^3}.
\]
The first opposite-pair assembly is therefore rational in $q$ plus a rational multiple of $\log F_A(q)$; the other gives the same expression with $A$ and $B$ exchanged.  In the remaining rational part all simple-pole residues cancel, and its second primitive is $\log\Xi(q)$.  Adding the type-one sectors gives exactly \eqref{eq:elementary-density}.

The two endpoint constants are $g(1)=0$ and $g'(1)=-1$.  Two integrations by parts against the odd trilogarithm then give \eqref{eq:real-kernel}.  The divided-logarithm factorization above supplies continuation through $x=r$ and $x=1/r$, while the endpoint estimates follow from the displayed affine factors.  Finally, put $h(t)=g(e^t)+t$.  The sector representation gives $h(-t)=h(t)$, $h(0)=h'(0)=0$, and $h''(t)>0$ for $t\ne0$; hence the integrand in \eqref{eq:real-kernel} is positive.
\end{proof}

\section{The classical-polylogarithmic formula}
This section turns \eqref{eq:real-kernel} into a finite expression.  Every object in the theorem is defined by algebraic operations and endpoint substitutions; no unevaluated integral is retained.

Normalize
% TRACE: Inc-18 eq. normalization.
\begin{equation}\label{eq:norm}
 S=\sum_i\alpha_i,
 \qquad P=\prod_i\alpha_i,
 \qquad \mu_i=\frac{\alpha_i}{S},
 \qquad r=\frac{\mu_1\mu_3}{\mu_2\mu_4}.
\end{equation}
With indices modulo four, define the affine functions
% TRACE: Inc-18 affines.
\begin{equation}\label{eq:affines}
 L_i(x)=\mu_i+(1-\mu_i)x,
 \quad M_i(x)=1-\mu_i+\mu_i x,
 \quad E_i(x)=\mu_i+\mu_{i+1}+(1-\mu_i-\mu_{i+1})x,
 \quad X(x)=x.
\end{equation}
The functions $L_i,M_i,E_i$ are positive on $[0,1]$, while $X$ is positive on $(0,1]$ and has a simple zero at $0$.

For $s=2,3$, let $\operatorname{Li}_s^{\R}$ denote the real continuation on the real axis.  On $(-1,1)$ it is the ordinary power-series value, with the usual values at $0$ and $\pm1$.  For $y>1$ it is defined by
% TRACE: Inc-18 real inversion identities.
\begin{equation}\label{eq:li-pos-2}
 \operatorname{Li}_2^{\R}(y)=\frac{\pi^2}{3}-\frac12\log^2y-\operatorname{Li}_2(1/y),
\end{equation}
% TRACE: Inc-18 real inversion identities.
\begin{equation}\label{eq:li-pos-3}
 \operatorname{Li}_3^{\R}(y)=\frac{\pi^2}{3}\log y-\frac16\log^3y+\operatorname{Li}_3(1/y),
\end{equation}
and for $y<-1$ by
% TRACE: Inc-18 real inversion identities.
\begin{equation}\label{eq:li-neg-2}
 \operatorname{Li}_2^{\R}(y)=-\frac{\pi^2}{6}-\frac12\log^2(-y)-\operatorname{Li}_2(1/y),
\end{equation}
% TRACE: Inc-18 real inversion identities.
\begin{equation}\label{eq:li-neg-3}
 \operatorname{Li}_3^{\R}(y)=-\frac{\pi^2}{6}\log(-y)-\frac16\log^3(-y)+\operatorname{Li}_3(1/y).
\end{equation}
These formulas define one real branch for every real argument.  They also move every final nonconstant polylogarithm argument into $(-1,1)$.

For $z>0$ and real $d\ne0$, set
% TRACE: Inc-18 universal primitive.
\begin{equation}\label{eq:phi1}
 \Phi_1(d,z)=\log z\log|1-z/d|+\operatorname{Li}_2^{\R}(z/d),
 \qquad \Phi_1(0,z)=\frac12\log^2z,
\end{equation}
% TRACE: Inc-18 universal primitive.
\begin{equation}\label{eq:phi2}
 \Phi_2(d,z)=\log^2z\log|1-z/d|+2\log z\operatorname{Li}_2^{\R}(z/d)
 -2\operatorname{Li}_3^{\R}(z/d),
 \qquad \Phi_2(0,z)=\frac13\log^3z.
\end{equation}
Direct differentiation gives
% TRACE: Inc-18 primitive derivative.
\begin{equation}\label{eq:phi-derivative}
 \frac{\partial}{\partial z}\Phi_j(d,z)=\frac{\log^jz}{z-d},\qquad j=1,2.
\end{equation}
Let $L=A+Bx$ and $M=C+Dx$ be affine functions that are positive on the open integration interval.  First suppose that $B,D,BC-AD$, and $M(c)$ are nonzero.  Define
% TRACE: Inc-18 Möbius-polarization primitive.
\begin{align}\label{eq:J-generic}
 \mathcal J_c(L,M;x)=\frac12\bigg\{&\Phi_2(L(c),L(x))+\Phi_2(M(c),M(x))\notag\\
 &-\Phi_2\!\left(\frac{L(c)}{M(c)},\frac{L(x)}{M(x)}\right)
 +\Phi_2\!\left(\frac BD,\frac{L(x)}{M(x)}\right)\bigg\}.
\end{align}
It is a primitive of $\log L(x)\log M(x)/(x-c)$.  If $M(c)=0$, the third term in \eqref{eq:J-generic} is absent; if $D=0$, the fourth is absent.  These are the corresponding finite limits of the differentiated identity, not numerical limiting prescriptions.  When $M=\kappa L$ with $\kappa>0$, use instead
% TRACE: Inc-18 proportional-affine branch.
\begin{equation}\label{eq:J-prop}
 \mathcal J_c(L,\kappa L;x)=\Phi_2(L(c),L(x))+\log\kappa\,\Phi_1(L(c),L(x)).
\end{equation}
If one affine function is constant, the corresponding limit is
$\log A\,\Phi_1(M(c),M(x))$; for two constants it is
$\log A\log C\log|x-c|$.

Endpoint collisions are finite substitutions.  At an endpoint write $x=\varepsilon$ or $x=1-\varepsilon$ and expand in $\tau=\log\varepsilon$.  The map $\mathcal E_e$, $e\in\{0,1\}$, extracts the coefficient of $\tau^0$ after all powers of $\tau$ have been retained.  Divergent powers cancel only after the complete eighteen-block expression is assembled.  The only nontrivial local row needed below is
% TRACE: Inc-18 endpoint table.
\begin{equation}\label{eq:end-collision}
 \begin{aligned}
 z=d+q\varepsilon+O(\varepsilon^2),\ d>0:\qquad
 \mathcal E_e\Phi_1&=\log d\log|q/d|+\frac{\pi^2}{6},\\
 \mathcal E_e\Phi_2&=\log^2d\log|q/d|+\frac{\pi^2}{3}\log d-2\zeta(3).
 \end{aligned}
\end{equation}
For completeness, the endpoint maps are given by the following finite table.
Here $z\sim C\varepsilon^k$ means $C>0$, and $D_0=\log|d|$.
\begin{center}
\small
\begin{tabular}{@{}p{0.28\linewidth}p{0.30\linewidth}p{0.34\linewidth}@{}}
\toprule
case & $\mathcal E_e\Phi_1(d,z)$ & $\mathcal E_e\Phi_2(d,z)$\\\midrule
$d=0$, $z\sim C\varepsilon^k$ & $\tfrac12\log^2 C$ & $\tfrac13\log^3 C$\\
$d\ne0$, $z\to0$ & $0$ & $0$\\
$d\ne0$, $z\sim C/\varepsilon$ & $\tfrac12\log^2C-\tfrac12D_0^2+C_2(d)$ & $\tfrac13\log^3C-\tfrac13D_0^3+2C_2(d)D_0$\\
$z\to z_0>0$, $z_0\ne d$ & $\Phi_1(d,z_0)$ & $\Phi_2(d,z_0)$\\
$z=d+q\varepsilon+O(\varepsilon^2)$, $d>0$ & $\log d\log|q/d|+\pi^2/6$ & $\log^2d\log|q/d|+(\pi^2/3)\log d-2\zeta(3)$\\
\bottomrule
\end{tabular}
\end{center}
where $C_2(d)=\pi^2/3$ for $d>0$ and $C_2(d)=-\pi^2/6$ for $d<0$.
The complete endpoint expansions are polynomials in $\log\varepsilon$ plus terms tending to zero.  Their six divergent coefficients cancel exactly after the eighteen blocks are assembled.  The table therefore defines $\mathcal E_0$ and $\mathcal E_1$ by finite algebra, without numerical limiting procedures.

Define the finite classical block
% TRACE: Inc-18 definition of T block.
\begin{equation}\label{eq:T-block}
 T_c(L)=(\mathcal E_1-\mathcal E_0)
 \bigl[\mathcal J_c(1+x,L;x)-\mathcal J_c(1-x,L;x)\bigr].
\end{equation}
Finally put
% TRACE: Inc-18 18-block assembly.
\begin{align}\label{eq:S-assembly}
 \mathcal S={}&\sum_{i=1}^4T_0(L_i)-\sum_{i=1}^4T_0(E_i)\notag\\
 &-T_r(L_1)-T_r(L_3)+T_r(M_2)+T_r(M_4)+T_r(X)\notag\\
 &-T_{1/r}(L_2)-T_{1/r}(L_4)+T_{1/r}(M_1)+T_{1/r}(M_3)+T_{1/r}(X).
\end{align}
There are eighteen signed block evaluations before coincident terms are collected.

\begin{theorem}[Classical evaluation]\label{thm:main}
For every $\alpha_1,\ldots,\alpha_4>0$,
% TRACE: Inc-18 main theorem; deliberately equation 3.15.
\begin{equation}\label{eq:main-classical}
 \boxed{
 I(\alpha_1,\alpha_2,\alpha_3,\alpha_4)
 =\frac{(4\pi)^4}{P}\left\{\frac78\zeta(3)+\frac12\mathcal S\right\}.}
\end{equation}
This is a finite real expression in $\log$, $\operatorname{Li}_2$, and
$\operatorname{Li}_3$ of rational functions of the masses.  It has weight at most three and depth one.  No minimality statement is made.
\end{theorem}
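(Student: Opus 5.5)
The proof starts from Proposition~\ref{prop:real-kernel}, which already gives $I(\alpha)=\frac{(4\pi)^4}{P}\int_0^1[g(x)+\log x]\,\frac{\log(1+x)-\log(1-x)}{2x}\dd x$ with an absolutely integrable positive integrand. The task is therefore to evaluate this one-dimensional integral in closed form, and it will be convenient to pass to the normalized masses $\mu_i$ of \eqref{eq:norm}.

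\textbf{Step 1: partial fractions of the kernel.} Dividing numerator and denominator of $F_A$, $F_B$ and $\Xi$ by powers of $S$ gives
\[
F_A=\frac{L_1L_3}{X\,M_2M_4},\qquad F_B=\frac{L_2L_4}{X\,M_1M_3},\qquad \Xi=\frac{\prod_i M_i}{\prod_i E_i}.
\]
The logarithm $\log\Xi$ is additive in these factors. The pole factors contribute
\[
\frac{1}{x(1-x/r)}=\frac1x-\frac{1}{x-r},\qquad \frac{1}{x(1-rx)}=\frac1x-\frac{1}{x-1/r}.
\]
Combining these with $\log x$ and the factor $\frac1{2x}$, the integrand becomes a signed sum of terms $\frac{\log L(x)\,[\log(1+x)-\log(1-x)]}{2(x-c)}$ with $c\in\{0,r,1/r\}$ and $L$ one of $L_i,M_i,E_i,X$. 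The $c=0$ pieces collapse as follows: the $\frac1x$ parts of the $F_A,F_B$ terms cancel the $M_i$ in $\log\Xi$ and one of the two copies of $\log X$ against $+\log x$. What should remain at $c=0$ is $\sum\log L_i-\sum\log E_i$, together with the $c=r$ and $c=1/r$ lists appearing in \eqref{eq:S-assembly}; the sign $-\frac{1}{x-r}$ produces the minus on $L_1,L_3$ and the plus on $M_2,M_4,X$. I will verify this bookkeeping explicitly, including the surviving $\log x/x$ pieces. Any leftover $\int_0^1\frac{\log(1+x)-\log(1-x)}{2x}\cdot(\text{const})$ gives $\frac{\pi^2}{8}\cdot$const, and the single-log terms give $\frac78\zeta(3)$ via
\[
\int_0^1\frac{\log x\,\operatorname{artanh}x}{x}\,\dd x=-\tfrac78\zeta(3),
\]
up to sign. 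This is the origin of the constant $\frac78\zeta(3)$, and I will match the sign against the equal-mass value.

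\textbf{Step 2: the primitive.} Verify that $\mathcal J_c(L,M;x)$ is a primitive of $\log L\log M/(x-c)$. Polarize
\[
\log L\log M=\tfrac12\bigl[\log^2L+\log^2M-\log^2(L/M)\bigr].
\]
Substitute $z=L(x)$, $z=M(x)$ or $z=L/M$, and use \eqref{eq:phi-derivative} together with the M\"obius partial fraction
\[
\frac{(L/M)'}{L/M-L(c)/M(c)}-\frac{(L/M)'}{L/M-B/D}=\frac{1}{x-c}-\frac{M'}{M}\cdot(\dots).
\]
This fixes the fourth term of \eqref{eq:J-generic}. The degenerate cases ($M(c)=0$, $D=0$, proportional affines, constant affines) are obtained by redoing the same differentiation. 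Each term of Step~1 then integrates to $(\mathcal E_1-\mathcal E_0)[\mathcal J_c(1+x,L)-\mathcal J_c(1-x,L)]=T_c(L)$, taken as an improper integral.

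\textbf{Step 3: endpoints, the main obstacle.} Each individual block diverges at $x=1$, where $\log(1-x)$ and $M=1-x$ vanish, at $x=0$ when $L=X$, and at collisions $z=d$ such as $x=c$ coinciding with a zero. The total integral nevertheless converges by Proposition~\ref{prop:real-kernel}. The justification is to integrate over $[\varepsilon,1-\varepsilon]$, where the sum of primitives is exact, and then expand each $\Phi_j$ asymptotically as a polynomial in $\tau=\log\varepsilon$ plus $o(1)$. The expansions come from \eqref{eq:li-pos-2}--\eqref{eq:li-neg-3} and the standard behaviour of $\operatorname{Li}_s$ near $0$ and $1$, which produces the table and \eqref{eq:end-collision}. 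Because the left side has a finite limit, the coefficients of positive powers of $\tau$ must cancel in the sum, and the limit equals the sum of $\tau^0$ coefficients, namely $\mathcal E_1-\mathcal E_0$ applied blockwise. The hardest checks will be the entries for $z\sim C/\varepsilon$ (inversion constants $C_2(d)$) and for collisions, together with care that $x=r$ or $1/r$ inside $(0,1)$ poses no problem: the integrand is continuous there by the divided-log remark, so only principal-value-consistent real branches are needed, and $\operatorname{Li}^{\R}$ and $\log|\cdot|$ supply them.

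\textbf{Step 4: assembly and consistency checks.} Collect the pieces to obtain \eqref{eq:main-classical}. The weight and depth claims are immediate from the form of $\Phi_2$. As a sanity check, specialise to equal masses and compare with the stated $\operatorname{Li}_3(1/4)$ value.
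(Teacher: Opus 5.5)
Your proposal follows the paper's proof step for step: the same affine decomposition of $\log F_A$, $\log F_B$, $\log\Xi$, the same partial fractions at $c\in\{0,r,1/r\}$, the polarized primitive $\mathcal J_c$, endpoint extraction by $\mathcal E_e$, and the constant from the surviving $-\log x$ term. That sign can be fixed directly as $-\int_0^1\log x\,\operatorname{artanh}x\,\dd x/x=+\frac78\zeta(3)$, with no appeal to the equal-mass value, and your remark that convergence of the left side forces the positive powers of $\tau$ to cancel neatly justifies what the paper simply asserts. One correction to Step~2: the M\"obius relation is exact, $\frac{z'}{z-z(c)}-\frac{z'}{z-B/D}=\frac1{x-c}$ for $z=L/M$, because $\frac{z'}{z-z(c)}=\frac1{x-c}-\frac{M'}{M}$ and $\frac{z'}{z-B/D}=-\frac{M'}{M}$, so there is no residual term.
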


\begin{proof}
On $0<x<1$, the three logarithms in \eqref{eq:g-kernel} decompose as
\begin{align*}
 \log F_A&=\log L_1+\log L_3-\log M_2-\log M_4-\log x,\\
 \log F_B&=\log L_2+\log L_4-\log M_1-\log M_3-\log x,\\
 \log\Xi&=\sum_{i=1}^4\log M_i-\sum_{i=1}^4\log E_i.
\end{align*}
The polarization identity
$\log L\log M=\tfrac12\{\log^2L+\log^2M-\log^2(L/M)\}$, followed by $z=L/M$, gives \eqref{eq:J-generic}; indeed
$\dd x/(x-c)=\dd z/(z-L(c)/M(c))-\dd z/(z-B/D)$.  Formula \eqref{eq:phi-derivative} proves the primitive, including the finite degenerate branches described above.

Using
\[
 \frac{r}{x(r-x)}=\frac1x-\frac1{x-r},
 \qquad
 \frac{1}{x(1-rx)}=\frac1x-\frac1{x-1/r},
\]
and applying the endpoint maps term by term, the displayed logarithmic decompositions give
\[
 \mathcal S=\int_0^1
 \frac{\log(1+x)-\log(1-x)}{x}\,[g(x)+2\log x] \,\dd x.
\]
The six potentially divergent endpoint-log coefficients cancel identically in this assembled expression.  Finally,
\[
 \int_0^1\log x\,[\log(1+x)-\log(1-x)]\frac{\dd x}{x}
 =-\frac74\zeta(3).
\]
Substitution into the preceding identity converts \eqref{eq:real-kernel} into \eqref{eq:main-classical}.  Equations \eqref{eq:li-pos-2}--\eqref{eq:li-neg-3} then move every remaining argument to $(-1,1)$ without changing the real value.
\end{proof}

\section{Equal exponents}
The general formula simplifies substantially on the diagonal.  Direct integration of the equal-mass elementary density gives
% TRACE: Inc-16 two-trilogarithm formula.
\begin{align}\label{eq:equal-two-trilog}
 \frac{I(1,1,1,1)}{(4\pi)^4}={}&2\operatorname{Li}_3\!\left(\frac23\right)
 -\operatorname{Li}_3\!\left(\frac34\right)-\frac78\zeta(3)-\frac{\pi^2}{12}\log2\notag\\
 &+\left(\frac{\pi^2}{3}-\log^22\right)\log\frac32
 -\frac13\log^3\frac32.
\end{align}

\begin{proposition}\label{prop:li3-compression}
Identity
% TRACE: Inc-16 plus exact Lewin/Spence-Kummer compression proved in this revision.
\begin{align}\label{eq:li3-identity}
&2\operatorname{Li}_3\!\left(\frac23\right)-\operatorname{Li}_3\!\left(\frac34\right)
 -\frac78\zeta(3)-\frac{\pi^2}{12}\log2
 +\left(\frac{\pi^2}{3}-\log^22\right)\log\frac32
 -\frac13\log^3\frac32\notag\\
&\hspace{20mm}=\frac12\operatorname{Li}_3\!\left(\frac14\right)
 -\frac7{24}\zeta(3)+\frac{\pi^2}{12}\log2-\frac23\log^32
\end{align}
holds.
\end{proposition}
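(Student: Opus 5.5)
The plan is to reduce both sides of \eqref{eq:li3-identity} to one small basis of trilogarithm values using only classical functional equations, and then compare coefficients. The arguments on the left lie in two orbits of the anharmonic group $z\mapsto 1-z$, $z\mapsto 1/z$. The first is $\{1/3,2/3,3,3/2,-1/2,-2\}$, which contains $2/3$. The second is $\{1/4,3/4,4,4/3,-1/3,-3\}$, which contains $3/4$ and $1/4$. Within each orbit the only relations are inversion, \eqref{eq:li-pos-3} and \eqref{eq:li-neg-3}, together with Landen's identity
\[
 \operatorname{Li}_3(z)+\operatorname{Li}_3(1-z)+\operatorname{Li}_3(1-1/z)
 =\zeta(3)+\tfrac16\log^3z+\tfrac{\pi^2}{6}\log z-\tfrac12\log^2z\log(1-z),
\]
valid for $0<z<1$. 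These relations leave one free trilogarithm per orbit.

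First, I apply Landen at $z=1/3$ and invert $\operatorname{Li}_3(-2)$ via \eqref{eq:li-neg-3}. This writes $\operatorname{Li}_3(2/3)$ in terms of $\operatorname{Li}_3(1/3)$, $\operatorname{Li}_3(-1/2)$ and products of $\log2$, $\log3$, $\pi^2$ and $\zeta(3)$. Second, I apply Landen at $z=1/4$ and invert $\operatorname{Li}_3(-3)$. This writes $\operatorname{Li}_3(3/4)$ in terms of $\operatorname{Li}_3(1/4)$ and $\operatorname{Li}_3(-1/3)$.

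The two orbits must then be linked, which takes two more relations. The first link is the duplication formula $\operatorname{Li}_3(z^2)=4\operatorname{Li}_3(z)+4\operatorname{Li}_3(-z)$ at $z=1/2$, together with the classical value $\operatorname{Li}_3(1/2)=\tfrac78\zeta(3)-\tfrac{\pi^2}{12}\log2+\tfrac16\log^32$. It converts $\operatorname{Li}_3(-1/2)$ into $\tfrac14\operatorname{Li}_3(1/4)$ plus elementary terms. The second link is duplication at $z=1/3$, which gives $\operatorname{Li}_3(1/9)=4\operatorname{Li}_3(1/3)+4\operatorname{Li}_3(-1/3)$. I combine it with Lewin's closed form for $\operatorname{Li}_3(1/3)-\tfrac16\operatorname{Li}_3(1/9)$, a specialization of the Spence--Kummer two-variable trilogarithm equation. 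The result expresses $\operatorname{Li}_3(-1/3)$ through $\operatorname{Li}_3(1/3)$ and elementary constants.

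After these substitutions, the left-hand side is a $\Q$-linear combination of $\operatorname{Li}_3(1/4)$, $\operatorname{Li}_3(1/3)$, $\zeta(3)$, $\pi^2\log2$, $\pi^2\log3$ and the cubic monomials in $\log2$ and $\log3$. The claim reduces to three checks. The coefficient of $\operatorname{Li}_3(1/3)$ must vanish. The coefficient of $\operatorname{Li}_3(1/4)$ must equal $\tfrac12$. All terms containing $\log3$ must cancel.

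The main obstacle is the Lewin/Spence--Kummer step. The exact rational coefficients in that specialization must be recovered reliably. I would rederive it by choosing the Kummer variables so that the nine arguments collapse onto $\{1/3,-1/3,1/9\}$ and their images under inversion, rather than quoting it. As an independent safeguard, I would evaluate both sides numerically to high precision before trusting the coefficient bookkeeping. The remaining elementary-term algebra is routine but error-prone, so I would organize it in the monomial basis $\{\log^a2\,\log^b3\}$ with $a+b=3$, plus $\pi^2\log2$ and $\pi^2\log3$.
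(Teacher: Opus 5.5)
Your route is the paper's proof in substance. The paper applies the three-term relation directly at $x=3/4$ and $x=2/3$; this is equivalent to your Landen at $1/4$ and $1/3$ followed by inverting $\operatorname{Li}_3(-3)$ and $\operatorname{Li}_3(-2)$. It then uses duplication at $z=1/2$ and $z=1/3$, the value of $\operatorname{Li}_3(1/2)$, and a base-three identity from the Spence--Kummer relation at $(x,y)=(-1,1/3)$, and collects terms. The $\log3$ cancellation you anticipate is exactly what happens there.

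\textbf{The one identity you quote is wrong, and it is the step that makes the reduction work.} The elementary base-three combination is not $\operatorname{Li}_3(1/3)-\frac16\operatorname{Li}_3(1/9)$. That is the dilogarithm pattern, where duplication carries a factor $2$. For $\operatorname{Li}_3$ the factor is $4$, and the correct relation is $\operatorname{Li}_3(-1/3)=2\operatorname{Li}_3(1/3)-\frac{13}{6}\zeta(3)+\frac{\pi^2}{6}\log3-\frac16\log^33$. Equivalently, $12\operatorname{Li}_3(1/3)-\operatorname{Li}_3(1/9)=\frac{26}{3}\zeta(3)-\frac23\pi^2\log3+\frac23\log^33$.

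Your combination equals $-\operatorname{Li}_3(1/3)$ plus elementary terms, so it is not the one with a classical closed form. Used as stated, it would give $\operatorname{Li}_3(-1/3)=\frac12\operatorname{Li}_3(1/3)+\cdots$. The $\operatorname{Li}_3(1/3)$ coefficient in your reduced left side would then be $-2+\frac12=-\frac32$, and your first check would fail. With the correct relation that coefficient is $-2+2=0$, the $\operatorname{Li}_3(1/4)$ coefficient is $-\frac12+1=\frac12$, and the rest is the bookkeeping you describe. Your stated intention to rederive this step from the nine-term relation, with a numerical cross-check, would catch the slip. As written, however, the proposal leans on a false identity.

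A minor point: inversion plus Landen leave two free values per orbit, not one (for instance $\operatorname{Li}_3(1/3)$ and $\operatorname{Li}_3(-1/2)$). Your actual substitutions already treat them that way, which is why two values, $\operatorname{Li}_3(1/3)$ and $\operatorname{Li}_3(1/4)$, survive after the two links.
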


\begin{proof}
Apply the three-term relation
% TRACE: Lewin 1981, three-term trilogarithm relation.
\begin{equation}\label{eq:li3-three-term}
 \operatorname{Li}_3(x)+\operatorname{Li}_3(1-x)+\operatorname{Li}_3(1-1/x)
 =\zeta(3)+\frac16\log^3x+\frac{\pi^2}{6}\log x
 -\frac12\log^2x\log(1-x)
\end{equation}
to $x=3/4$ and $x=2/3$, and use
$\operatorname{Li}_3(z)+\operatorname{Li}_3(-z)=\tfrac14\operatorname{Li}_3(z^2)$
at $z=1/2$ and $z=1/3$.  If the right side of \eqref{eq:li3-three-term} is denoted by $e(x)$, then
% TRACE: symbolic intermediate combination from Proposition 4.1 proof.
\begin{align}\label{eq:li3-combination}
6\operatorname{Li}_3\!\left(\frac14\right)-24\operatorname{Li}_3\!\left(\frac23\right)
+12\operatorname{Li}_3\!\left(\frac34\right)
={}&12e\!\left(\frac34\right)-24e\!\left(\frac23\right)\notag\\
&-24\operatorname{Li}_3\!\left(\frac12\right)
+36\operatorname{Li}_3\!\left(\frac13\right)
-3\operatorname{Li}_3\!\left(\frac19\right).
\end{align}
Insert
% TRACE: standard Li3(1/2) and base-3 Spence-Kummer identity.
\begin{align}\label{eq:li3-base-values}
 \operatorname{Li}_3\!\left(\frac12\right)&=\frac78\zeta(3)-\frac{\pi^2}{12}\log2+\frac16\log^32,\notag\\
 12\operatorname{Li}_3\!\left(\frac13\right)-\operatorname{Li}_3\!\left(\frac19\right)
 &=\frac{26}{3}\zeta(3)-\frac23\pi^2\log3+\frac23\log^33.
\end{align}
The second line is the tabulated base-three identity \cite[Eq.~(E.14)]{DelDucaDuhrSmirnov}; it follows by combining duplication with
\[
 \operatorname{Li}_3\!\left(-\frac13\right)=2\operatorname{Li}_3\!\left(\frac13\right)
 -\frac{13}{6}\zeta(3)+\frac{\pi^2}{6}\log3-\frac16\log^33,
\]
the specialization $(x,y)=(-1,1/3)$ of the Spence--Kummer nine-term relation \cite[Eq.~(6.107)]{Lewin}; see also \cite[Sec.~5]{Broadhurst2010}.  Writing $\ell_2=\log2$ and $\ell_3=\log3$, the right side of
\eqref{eq:li3-combination} becomes, after substitution,
\[
24\ell_2^3-24\ell_2^2\ell_3+12\ell_2\ell_3^2-4\ell_3^3
-6\pi^2\ell_2+4\pi^2\ell_3-7\zeta(3).
\]
Solving \eqref{eq:li3-combination} for
$2\operatorname{Li}_3(2/3)-\operatorname{Li}_3(3/4)$ and then using
$\log(3/2)=\ell_3-\ell_2$ gives exactly the right side of
\eqref{eq:li3-identity}.  This is an algebraic collection of the displayed
functional equations, not a numerical verification.
\end{proof}

\begin{corollary}\label{cor:equal}
For every $\lambda>0$,
% TRACE: Inc-16/18 equal-mass formula and degree -4 homogeneity.
\begin{equation}\label{eq:equal-final}
 \boxed{
 I(\lambda,\lambda,\lambda,\lambda)=\frac{(4\pi)^4}{\lambda^4}
 \left\{\frac12\operatorname{Li}_3\!\left(\frac14\right)-\frac7{24}\zeta(3)
 +\frac{\pi^2\log2}{12}-\frac23\log^32\right\}.}
\end{equation}
\end{corollary}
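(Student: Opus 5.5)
The plan is to combine homogeneity, the equal-mass formula \eqref{eq:equal-two-trilog}, and Proposition~\ref{prop:li3-compression}. By \eqref{eq:def-I}, substituting $\mathbf r_i=\mathbf s_i/\lambda$ scales the measure by $\lambda^{-12}$. It scales the eight inverse distances by $\lambda^{8}$ and leaves the exponent $e^{-\lambda\sum r_i}$ as $e^{-\sum s_i}$. Hence $I(\lambda,\lambda,\lambda,\lambda)=\lambda^{-4}I(1,1,1,1)$, which is the degree $-4$ homogeneity already noted after \eqref{eq:def-I}. It therefore suffices to treat $\lambda=1$.

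At $\lambda=1$, I will justify \eqref{eq:equal-two-trilog} from the general machinery rather than merely quote it. Setting all $\mu_i=1/4$ gives $r=1$. The kernel $g$ in \eqref{eq:g-kernel} then collapses: $F_A=F_B=(1+3x)^2/(x(3+x)^2)$ and $\Xi=(3+x)^4/(1+x)^4$. In \eqref{eq:S-assembly} the poles at $c=r$ and $c=1/r$ coincide at $c=1$. The integrand of \eqref{eq:real-kernel} nevertheless stays regular there, because the divided-logarithm factorization makes the two $1/(1-x)$ quotients combine with a numerator vanishing to high order.

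Rather than evaluating the eighteen blocks at $c=1$, where endpoint collisions occur, I will integrate \eqref{eq:real-kernel} directly. The resulting integrals are of the form $\int_0^1 \log(a+bx)\,[\log(1+x)-\log(1-x)]\,dx/x$ and the analogous integrals with a $(1-x)^{-1}$ factor. Each of these evaluates through $\Phi_1,\Phi_2$ with arguments among $1/3,\,2/3,\,3/4,\,-1/3$ and $\pm1$. Reducing with the inversion formulas \eqref{eq:li-pos-2}--\eqref{eq:li-neg-3} should yield the two-trilogarithm expression \eqref{eq:equal-two-trilog}. Equivalently, one can specialize Theorem~\ref{thm:main} and resolve the $c=1$ collisions with the last row of the endpoint table.

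Given \eqref{eq:equal-two-trilog}, Proposition~\ref{prop:li3-compression} replaces its right-hand side exactly by $\tfrac12\operatorname{Li}_3(1/4)-\tfrac7{24}\zeta(3)+\tfrac{\pi^2}{12}\log2-\tfrac23\log^32$. Multiplying by $(4\pi)^4\lambda^{-4}$ then gives \eqref{eq:equal-final}.

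The main obstacle is the specialization step. At $r=1$ the two exceptional poles merge with the endpoint $x=1$, so the generic blocks $T_r$ and $T_{1/r}$ individually diverge. Their divergent $\log\varepsilon$ coefficients must be shown to cancel, and the finite parts must then be collected into the stated trilogarithms. My first attempt will be to avoid this entirely by working from the elementary density \eqref{eq:elementary-density} with the explicit equal-mass $g$. I expect that route to give the result after one integration by parts and partial fractions in $x$.
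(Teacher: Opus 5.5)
Your proposal is correct and follows the paper's own route: degree $-4$ homogeneity reduces the statement to $\lambda=1$; the equal-mass two-trilogarithm formula \eqref{eq:equal-two-trilog} comes from direct integration of the equal-mass elementary density; and Proposition~\ref{prop:li3-compression} converts that formula into the boxed constant. One small slip in your optional re-derivation is that each cycle factor at equal masses is $2+2x$, so $\Xi=(3+x)^4/[16(1+x)^4]$, which gives $g(1)=0$. The missing factor $16$ is harmless in the density route, where $g$ is differentiated twice. If you use it as written in the real-kernel integral \eqref{eq:real-kernel}, it would shift the bracket by $\tfrac{\pi^2}{2}\log 2$.
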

The relation to Broadhurst's three-dimensional tetrahedron is visible at the level of weight: the Clausen/dilogarithmic tetrahedron is followed here, one loop up, by a trilogarithmic constant with the same low-prime ramification.

\section{Verification and structural checks}
The analytic proof is accompanied by three independent numerical routes.  Their common input is only the angular reduction \eqref{eq:angular-collapse}; their subsequent numerical algorithms are different.

\subsection{Three numerical routes}
The original-sector route integrates the literal summands in
\eqref{eq:sector-normal}.  A dyadic decomposition isolates the corner
$u=v=w=1$.  On every retained box, complex-neighborhood bounds for the
three coordinate variables give a certified tensor Gauss remainder; the final
corner is bounded directly by positivity.  The complete calculation used 334
boxes and more than five million distinct tensor-grid points.

The positive-series route expands the odd trilogarithm and evaluates, for
$n=2\ell+1$,
\[
 R_m(n)=\int_{[0,1]^3}\frac{u^n v^{mn+1}w^{n+2}}
 {[d+w(c+bv+auv)]^4}\,\dd u\,\dd v\,\dd w.
\]
It then sums
\[
 I(\alpha)=6(4\pi)^4\sum_{\sigma\in S_4}\sum_{\ell\ge0}
 \frac{R_{m(\sigma)}(2\ell+1)}{(2\ell+1)^3}.
\]
Every term is positive.  A beta expansion evaluates retained modes in exact
rational arithmetic, while an explicit finite inverse-mode expansion and
Euler--Maclaurin bounds enclose the complete tail.

The classical route compiles Theorem~\ref{thm:main} at an exact rational mass
tuple, removes all endpoint symbols, and evaluates the resulting real
logarithms and polylogarithms in exact dyadic interval arithmetic.  Every
infinite series carries a signed remainder; no numerical quadrature or GPL
evaluator enters this route.

\begin{table}[ht]
\centering\small
\begin{tabular}{@{}llr@{}}
\toprule
$(\alpha_1,\alpha_2,\alpha_3,\alpha_4)$ & common certified value & common places\\\midrule
$(1,1,1,1)$ & $3159.60899251365163791156042906107837$ & 32\\
$(1,2,3,5)$ & $78.31045350989936985851885309501456$ & 32\\
$(1,1,2,2)$ & $716.43823041968417808646698360606176$ & 32\\
$(1.1,1.85,2.37,2.91)$ & $200.75952442512883657208388007939669$ & 32\\
\bottomrule
\end{tabular}
\caption{Agreement of original-sector quadrature, the positive angular-mode series, and the classical formula.}
\label{tab:three-routes}
\end{table}
% TRACE: Inc-05,07--09,13,18 benchmark certificates; digit strings in Table 1.

The exact-dyadic evaluator supplies 100-place intervals at eight tuples.  The common rounded values are:
\begin{center}
\refstepcounter{table}\label{tab:hundred}
\scriptsize
\begin{tabular}{@{}lp{0.73\linewidth}@{}}
\toprule
masses & certified 100-place value\\\midrule
$(1,1,1,1)$ & {\ttfamily 3159.608992513651637911560429061078369625096312248831872\newline 8322738025231878407608611943955574925132846650851}\\
$(1,2,3,5)$ & {\ttfamily 78.31045350989936985851885309501456186993083754527936745\newline 94967561146202833166651934179999373512458506338}\\
$(1,1,2,2)$ & {\ttfamily 716.4382304196841780864669836060617632616949369595853072\newline 457888898113121420979852100909612952690407001200}\\
King tuple & {\ttfamily 200.7595244251288365720838800793966851018292482414653781\newline 531291922000696263638073637767621906002300833931}\\
$(1,2,6,3)$ & {\ttfamily 62.70307719275218832581947231547944247214996829556567001\newline 71626699078532490873784853026176303437132045940}\\
$(2,3,5,7)$ & {\ttfamily 12.40569428355165726553345288935811758944101442733473916\newline 10718067048825066600988476390504644825884383447}\\
$(1,2,3,6)$ & {\ttfamily 60.94082285700569606527527108607063062454645689314468954\newline 50520039757845211518365881010561149487097398303}\\
$(2,1,3,5)$ & {\ttfamily 80.03981495875086293025715948477197523691716004857586966\newline 80829768513099706722234321251439589677881046779}\\
\bottomrule
\end{tabular}
\par\smallskip
\small Table~\thetable: Exact-dyadic certificates.  The King tuple is $(11/10,37/20,237/100,291/100)$.
\end{center}
% TRACE: Inc-18 eight 100-place certificates; all digit strings in Table 2.

As an external calibration, the positive radial machinery reproduces King's Table~2 values at the King tuple to every printed digit:
\begin{table}[ht]
\centering\small
\begin{tabular}{@{}lp{0.53\linewidth}@{}}
\toprule
integral & reproduced value\\\midrule
path $12,23,34$ & $155.2788229172762030448705061992562042709$\\
star, center-spoke derivative ($S'_4$) & $73.89819601667390396396288267469508944573$\\
triangle $12,13,23$ with leaf $34$ ($L_4$) & $43.79915769725917539512080204618774228423$\\
\bottomrule
\end{tabular}
\caption{Independent reproduction of the three calibration entries in \cite[Table~2]{King2014}.}
\label{tab:king}
\end{table}
% TRACE: Inc-06 King calibration certificate; all digit strings in Table 3.

The cycle has the dihedral symmetry group $D_4$, not the full symmetric group.  As a certified non-$D_4$ control, the two interval calculations give, to 35 decimal places,
% TRACE: Inc-14 non-D4 negative control.
\begin{equation}\label{eq:negative-control}
 I(2,1,3,5)-I(1,2,3,5)
 \approx1.72936144885149307173830638975741337>0.
\end{equation}

\begin{theorem}[Rearrangement law]\label{thm:rearrangement}
For $a,b,c,d>0$,
% TRACE: Inc-14 universal rearrangement theorem.
\begin{equation}\label{eq:swap-law}
 \sgn\{I(b,a,c,d)-I(a,b,c,d)\}=\sgn\{(b-a)(d-c)\}.
\end{equation}
Equality holds exactly when $a=b$ or $c=d$.  Consequently, if $0<a\le b\le c\le d$, then
% TRACE: Inc-14 arrangement order and equality cases.
\begin{equation}\label{eq:arrangement-order}
 I(a,c,b,d)\le I(a,b,c,d)\le I(a,b,d,c),
\end{equation}
with strict inequalities when the four masses are strictly ordered.
\end{theorem}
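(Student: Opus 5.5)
The plan is to work directly with the radial representation \eqref{eq:angular-collapse} and never touch the classical formula. Write $K(q)=\operatorname{Li}_3(q)-\operatorname{Li}_3(-q)$. It is strictly increasing on $(0,1]$, since its power series has positive odd coefficients. Exchanging the first two exponents is the same as exchanging the radii $r_1,r_2$ in the kernel. The numerator $r_1r_2r_3r_4$ of $q$ is unchanged, while the denominator $M_{12}M_{23}M_{34}M_{41}$ becomes $M_{12}M_{13}M_{34}M_{42}$. Hence
\[
 I(b,a,c,d)-I(a,b,c,d)=\frac{(4\pi)^4}{2}\int_{(0,\infty)^4}e^{-ar_1-br_2-cr_3-dr_4}\,\bigl\{K(q')-K(q)\bigr\}\prod_i\dd r_i,
\]
where $q'=\prod r_i/(M_{12}M_{34}M_{13}M_{24})$ and $q=\prod r_i/(M_{12}M_{34}M_{23}M_{14})$. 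All integrands are nonnegative, so the splitting is justified by Tonelli's theorem.

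Next I would antisymmetrize. Set $\Delta(r)=K(q')-K(q)$. The common factor $M_{12}M_{34}$ is invariant under $r_1\leftrightarrow r_2$ and under $r_3\leftrightarrow r_4$. Each of these swaps exchanges $M_{13}M_{24}$ with $M_{23}M_{14}$, so $\Delta$ is odd under each swap separately. Averaging the integral over the four images of the group generated by the two swaps gives
\[
 I(b,a,c,d)-I(a,b,c,d)=\frac{(4\pi)^4}{8}\int\Delta(r)\,
 \bigl[e^{-ar_1-br_2}-e^{-ar_2-br_1}\bigr]\bigl[e^{-cr_3-dr_4}-e^{-cr_4-dr_3}\bigr]\prod_i\dd r_i .
\]
The first bracket has sign $\sgn\{(b-a)(r_1-r_2)\}$ and the second has sign $\sgn\{(d-c)(r_3-r_4)\}$.

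The key lemma, and the only real obstacle, is a sign rule for $\Delta$. I would show that
\[
 \sgn\{M_{23}M_{14}-M_{13}M_{24}\}=\sgn\{(r_1-r_2)(r_3-r_4)\}
 \quad\text{or else the two products are equal.}
\]
In words, pairing larger with larger gives the smaller product of maxima. For example, at $r_1=r_3=2$ and $r_2=r_4=1$ we get $M_{13}M_{24}=2<4=M_{23}M_{14}$. The general statement is the submodularity of $(x,y)\mapsto\log\max(x,y)$ on a chain, and I would check it by a short case split on the position of the larger of $r_1,r_2$ relative to $r_3,r_4$. Since $K$ is increasing, this yields $\sgn\Delta=\sgn\{(r_1-r_2)(r_3-r_4)\}$ wherever $\Delta\neq0$. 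The sign is strict on an open set, for instance when $r_1,r_3>r_2,r_4$, so $\Delta$ is nonzero on a set of positive measure. The integrand then has the constant sign $\sgn\{(b-a)(d-c)\}$ and is nonzero on a set of positive measure whenever $a\ne b$ and $c\ne d$. This proves \eqref{eq:swap-law} and its strictness. If $a=b$ or $c=d$, one bracket vanishes identically, which gives equality.

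For \eqref{eq:arrangement-order} I would use the dihedral invariance of the cycle. The reflection $(1,2,3,4)\mapsto(2,1,4,3)$ gives $I(a,b,d,c)=I(b,a,c,d)$, so \eqref{eq:swap-law} yields $\sgn\{I(a,b,d,c)-I(a,b,c,d)\}=\sgn\{(b-a)(d-c)\}\ge0$. A rotation moves the edge $23$ to the position of $12$. Applied there, the law gives $\sgn\{I(a,c,b,d)-I(a,b,c,d)\}=\sgn\{(c-b)(a-d)\}\le0$. Both inequalities are strict when the four masses are strictly ordered.
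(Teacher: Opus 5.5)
Your proof is correct and follows the paper's route. Averaging over the two transpositions produces exactly the paper's products of $2\times2$ determinants of $e^{-\alpha_i r_j}$, your $\log\max$ submodularity lemma is the unordered form of the paper's adjacent-versus-opposite comparison $x_1/x_4>x_1x_2/(x_3x_4)$, and you additionally spell out the dihedral bookkeeping for \eqref{eq:arrangement-order} that the paper leaves implicit.
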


\begin{proof}
On $0<x_1<x_2<x_3<x_4$, the two cycle arguments are $x_1/x_4$ and $x_1x_2/(x_3x_4)$; the first is larger, so their odd-trilogarithm difference is positive.  The difference between two mass assignments factors into sums of products of $2\times2$ determinants of the exponential matrix $e^{-\alpha_i x_j}$.  Each determinant has the sign opposite to its mass difference.  Integration against the positive radial measure gives \eqref{eq:swap-law}; strictness and the equality cases follow from the strict determinant signs off coincident masses.
\end{proof}

\subsection{Derivative consistency}
Differentiation under the integral is justified uniformly on compact subsets of
the positive mass orthant.  Direct radius-insertion calculations at
$(1,2,3,5)$ give the complete gradient
\begin{center}
\small
\begin{tabular}{@{}lc@{}}
\toprule
quantity & certified value (40 places)\\\midrule
$-\partial_{\alpha_1}I$ & $51.5938232136790932412965529908231246727304$\\
$-\partial_{\alpha_2}I$ & $37.7244391882592544581271087453801382787620$\\
$-\partial_{\alpha_3}I$ & $26.9616321819670585425731703126374880451995$\\
$-\partial_{\alpha_4}I$ & $21.0628431806997403297610261921124764227741$\\
\bottomrule
\end{tabular}
\end{center}
% TRACE: Inc-11/15 direct marked-radius gradient certificates.
The four intervals satisfy the homogeneous Euler identity
$\sum_i\alpha_i(-\partial_{\alpha_i}I)=4I$.  At both equal and asymmetric
masses, a separately bounded central difference of the finite classical formula
lies inside the direct marked-integral interval.  Thus the derivative test does
not rely on symmetry or on formal differentiation of a numerically fitted value.

\subsection{Mass limits}
For the small-mass limit define the elementary three-electron path integral
\[
 T(a,b,c)=\int_{\R^9}
 \frac{e^{-ar_1-br_2-cr_3}}{r_1r_2r_3\,r_{12}r_{23}}
 \prod_{i=1}^3\dd^3r_i.
\]
Its closed form is
% TRACE: Inc-12 path formula.
\begin{equation}\label{eq:T-small}
 T(a,b,c)=\frac{(4\pi)^3}{a^2c^2}
 \log\frac{(a+b)(b+c)}{b(a+b+c)}.
\end{equation}
Then, with the reference mass $b$ making the logarithm dimensionless,
% TRACE: Inc-12 logarithmic small-mass law.
\begin{equation}\label{eq:small-law}
 \lim_{\varepsilon\downarrow0}\frac{I(a,b,c,\varepsilon)}{\log(b/\varepsilon)}=4\pi T(a,b,c).
\end{equation}
The proof isolates the exact inverse-radius tail of the odd trilogarithm after the fourth radius exceeds the other two neighboring radii.  More precisely, with
$L=4\pi T(a,b,c)$ and $K=(4\pi)^4/(a^2bc^2)$ there is a finite constant $C_b$ such that
% TRACE: Inc-12 global one-sided small-mass remainder.
\begin{equation}\label{eq:small-quantitative}
 0\le I(a,b,c,\varepsilon)-L\log(b/\varepsilon)-C_b\le K\varepsilon
 \qquad(\varepsilon>0).
\end{equation}
There is also a second-order logarithmic remainder after the explicit linear
term is removed.  These bounds justify the limit independently of numerical
extrapolation.

For the opposite limit, the coefficient is the finite three-electron integral
\[
 U(a,b,c)=\int_{\R^9}
 \frac{e^{-ar_1-br_2-cr_3}}{r_1^2r_2r_3^2\,r_{12}r_{23}}
 \prod_{i=1}^3\dd^3r_i.
\]
Put $X=a/(a+b)$, $Y=c/(b+c)$, $S_3=a+b+c$, and
% TRACE: Inc-11 dilogarithmic U coefficient.
\begin{equation}\label{eq:V-large}
 V=\frac{\pi^2}{6}+\operatorname{Li}_2(XY)-\operatorname{Li}_2(X)-\operatorname{Li}_2(Y)+\log X\log Y.
\end{equation}
Define
% TRACE: Inc-11 closed U formula.
\begin{equation}\label{eq:U-large}
 \frac{U(a,b,c)}{(4\pi)^3}=\frac{V}{b^2}
 +\frac1{ab}\log\frac{b+c}{S_3}+\frac1{bc}\log\frac{a+b}{S_3}
 +\frac1{ac}\log\frac{(a+b)(b+c)}{bS_3}.
\end{equation}
Then
% TRACE: Inc-11 large-mass law and no logarithm.
\begin{equation}\label{eq:large-law}
 I(a,b,c,A)=\frac{4\pi}{A^2}U(a,b,c)+o(A^{-2}),\qquad A\to\infty.
\end{equation}
Unlike the complete six-edge rim, the cycle has no logarithm at this order.  If
$m_*=\min(a,b,c)$, the finite-$A$ error is bounded explicitly by
% TRACE: Inc-11 quantitative large-mass bounds.
\begin{align}\label{eq:large-bounds}
-\frac{(4\pi)^3}{A}\left(\frac1a+\frac1c\right)
\left[4+2\log\left(1+\frac Ab\right)\right]
&\le \frac{A^2I(a,b,c,A)}{4\pi}-U(a,b,c)\notag\\
&\le \frac{3(4\pi)^3}{4Am_*}.
\end{align}
Thus \eqref{eq:large-law} follows with a fully explicit error term.

\section{Scope of applicability}
The formula evaluates the Slater integral with all four nuclear spokes and the four interelectronic edges of a cycle.  Differentiation with respect to the $\alpha_i$ supplies nonnegative powers of the one-electron radii.  Standard Hy-CI, in the conventional sense reviewed by Harris, appends at most one $r_{ij}$ to each basis term and therefore never generates a cyclic rim.  The direct consumers are fully correlated Hylleraas expansions, double-linked Hy-CI, and extended Hy-CI (E-Hy-CI).  No assertion is made here about the efficiency or convergence rate of any particular atomic calculation, or about integrals containing either diagonal.

\section{Outlook: the first chord}\label{sec:outlook}
Adding the chord $r_{13}^{-1}$ gives the rim $K_4-\{24\}$.  At fixed radii its differentiated angular kernel is the Lauricella period $F_D^{(3)}(1;\tfrac12,\tfrac12,\tfrac12;\tfrac32;X,Y,Z)$; a one-parameter specialization has a rank-two monodromy representation whose Zariski closure contains $\mathrm{SL}_2$, and the fixed-radius kernel is therefore not a multiple polylogarithm of the declared regularized class.  Whether the elliptic quotient survives the physical mass-only radial assembly remains open.  Details of this chorded analysis are to appear separately.
% TRACE: R3-D1/R3-D2 fixed-radius elliptic and Lauricella results; mass-level status explicitly open.

\appendix
\section{Scope of the numerical certificates}
The numerical records accompanying the derivation check the displayed formula by three routes, but they do not replace the analytic reductions in Sections~2--4.  The original-sector quadrature applies directly to the integrand in \eqref{eq:sector-normal}.  If
$C_j=[1-2^{-j},1]^3$, its only omitted box is $C_{46}$.  With
$\delta=2^{-46}$ and $\rho=1-\delta$, positivity gives, in an ordered sector,
% TRACE: Inc-13 omitted-corner bound.
\[
 0\le \int_{C_{46}} f_m\le
 \frac{3\delta^3}{[d+c\rho+b\rho^2+a\rho^3]^4}.
\]
On every retained box $B$, explicit complex-neighborhood bounds
$M_u,M_v,M_w$ give the tensor Gauss estimate
% TRACE: Inc-13 certified Gauss remainder.
\[
 \left|\int_B f_m-Q_{n,B}f_m\right|
 \le \frac{28}{5}|B|(M_u+M_v+M_w)\left(\frac4{49}\right)^n.
\]
All node, weight, special-function, and rounding errors are enclosed separately.
The resulting four complete intervals have widths between
$2.8\times10^{-36}$ and $1.3\times10^{-34}$.

The positive angular-mode computation encloses each retained beta expansion in
exact rational arithmetic and bounds all omitted modes analytically.  The
classical-formula computation is independent of that recurrence.  For
$0<x<1$ it uses, among other signed estimates,
% TRACE: Inc-18 elementary polylogarithm tail.
\[
 0\le \operatorname{Li}_s(x)-\sum_{k=1}^{N}\frac{x^k}{k^s}
 \le \frac{x^{N+1}}{(N+1)^s(1-x)},
\]
with the alternating next-term bound for negative arguments.  Logarithms,
$\pi$, and $\zeta(3)$ are enclosed by rational series with outward rounding.
The eight intervals in Table~\ref{tab:hundred} have widths below
$1.22\times10^{-137}$.  The scripts verify finite algebra, interval
propagation, sector enumeration, and the printed decimal enclosures; they do
not establish the literature history, the analytic continuation arguments, or
any claim for a graph containing a chord.

\begingroup\raggedright
\small
\setlength{\itemsep}{2pt}

\endgroup
\end{document}